\documentclass[a4paper, oneside, 11pt, reqno, final]{amsart}
\usepackage{simonfoldvik}

\input{Tocfix.tex}
\hypersetup{pdfauthor={Simon Foldvik}}
\hypersetup{pdftitle={Causal-Temporal Event Graphs: A Formal Model for Recursive Agent Execution Traces}}

\title[Causal-Temporal Event Graphs for Agent Execution Traces]{Causal-Temporal Event Graphs: \\ A Formal Model for Recursive Agent Execution Traces}

\begin{document}
    \maketitle
    \begin{center}
        \vspace{-2ex}
        \textsc{Simon Foldvik}\footnote{Independent researcher: \texttt{research@simonfoldvik.no}} \\[1ex]
        \textsc{April 19, 2026} \\
        \vspace{3ex}
    \end{center}

    \begin{quotation}
    \small
    \textbf{Abstract.}
    We introduce causal-temporal event graphs (CTEGs) as a formal model for fully resolved
    recursive agent execution records under single-parenthood causal semantics. We formalise direct
    event emissions and recursive subagent invocations as extension procedures on generic typed
    temporal graphs and show that the recursive closure \(\E_\infty\) of the induced maximal
    dynamics starting from single causal roots consists entirely of finite sequences of CTEGs. A
    CTEG is a rooted arborescence whose nodes carry timestamps and event types, subject to the
    constraint that timestamps be strictly increasing along causal paths. We realise \(\E_\infty\)
    as the increasing union of a recursive hierarchy \(\E_0 \subseteq \E_1 \subseteq \cdots\) of
    agent execution levels parametrised by recursion depth, which is recognised as the ascending
    Kleene chain of a monotone operator \(\varphi\) admitting \(\E_\infty\) as its least fixed
    point. Although the introduction of the full hierarchy is natural, stabilisation occurs already
    at \(\E_1\) if one insists that the internal construction of a subagent execution trace be a
    delegated and opaque computational unit. The CTEG formalism supports compositional construction
    of globally well-formed execution traces from local agent behaviour without centralised
    coordination, preserves well-formedness under partial execution failure, and admits a natural
    relational database encoding. The arborescent structure of CTEGs is further compatible with
    cryptographic Merkle tree commitments for tamper-evident session verification.
\end{quotation}
 
    {
        \setlength{\parskip}{2pt}
        \tableofcontents
    }
    \section{Introduction}
    Recent years have seen a surge in the use of \emph{agentic workflows}, wherein large language
    models (LLMs) play a central role in guiding artificial intelligence systems into solving
    generic, open-ended problems from natural language inputs. These systems are by their nature
    probabilistic, hence possess an inherent element of nondeterminism. This poses a challenge when
    such workflows are deployed to real-world decision-making systems in fields where regulatory
    frameworks impose requirements on explainability, auditability, and compliance with business
    and legal requirements.

    A common response is to capture the sequence \((e_k)\) of events produced by an agentic system
    (its execution log), enabling the reconstruction of full agent session chronologies for
    downstream inspection and replay. This is suitable for studying the \emph{temporal}
    relationships in agentic executions, but we demonstrate by means of example in
    \cref{Remark: Impossibility of causal reconstruction} that plain linear traces alone, without
    explicitly encoded causal parent structure, are not well-suited to also capture the underlying
    \emph{causal} relationships between events in agentic workflows. A tool output is the result
    of a tool invocation, and it matters not only in which order they come, but also which tools
    caused which outcomes. While agent execution tracing is an increasingly active engineering
    practice and area of research (see for instance \cite{PROV25, Szp26} and the references
    therein), formal treatments of recursive agent causal structure appear to remain limited.

    We introduce \emph{causal-temporal event graphs} (CTEGs) as a formal model for fully resolved
    recursive agent execution records under strict single-parenthood causal semantics. A CTEG is a
    rooted arborescence with typed nodes and timestamps, subject to the constraint that timestamps
    be strictly increasing along causal paths (\cref{Def: CTEG}). The formalism is motivated by
    distributed agentic production systems, wherein agents first declare their event type
    hierarchies and then asynchronously emit typed events in response to previous execution states,
    forming potentially independent cascading causal chains. The present version focuses on the
    formal model and its basic closure properties, with broader positioning relative to adjacent
    tracing frameworks deferred to future revisions.

    \begin{figure}[b]
        \includegraphics[width=\textwidth]{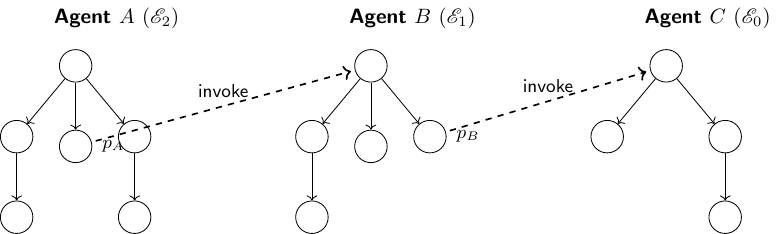}
        \caption{Recursive \(\E_2\)-agent execution. Timestamps and types are omitted for clarity.}
        \label{Fig: Recursive E2-execution trace}
    \end{figure}

    The computational model we have in mind is that of a parent agent invoking any number of worker
    agents (the subagents), which in turn may invoke further subagents as part of their own
    executions. Subagents are responsible for the construction of their own internal execution
    traces, passing the finished subtraces up to the invoking parent agents as completed units. The
    parents then atomically graft the subagent traces into their own execution records at the
    subagent invocation nodes. This creates an opaque subagent invocation interface and a natural
    separation of concerns: Each subtrace construction remains small and records precisely the
    causal relationships the corresponding agent is in a position to assert. The procedure is
    illustrated in \cref{Fig: Recursive E2-execution trace}, with the resulting global execution
    trace depicted in \cref{Fig: Global E2-execution trace}.

    The main contribution of this paper is the identification and formalisation of CTEGs as natural
    objects for capturing fully resolved global execution traces in recursive agentic systems under
    single-parenthood causal semantics, together with a local-to-global compositional model of
    their construction. We formalise two local dynamical operations (direct event emissions and
    recursive subagent invocations) and show that the smallest set \(\E_\infty\) of execution
    sequences closed under these operations starting from single causal roots (the recursive
    closure) consists entirely of finite sequences of CTEGs
    (\cref{Theorem: CTEG characterisation}). This gives a formal account of how globally
    well-formed causal structure can be assembled in recursive agentic systems from delegated
    subagent trace constructions without centralised coordination, and this remains true even in
    the face of partial agent failure. The resulting arborescent traces admit natural relational
    database encodings and support tamper-evident cryptographic commitments via Merkle tree
    computations.

    The proof proceeds by introducing a recursive hierarchy
    \begin{equation}
        \E_0 \subseteq \E_1 \subseteq \cdots
    \end{equation}
    of agent execution levels parametrised by recursion depth, whose union
    \(\smash{\E_\infty = \bigcup \E_d}\) is shown to be the least fixed point of a monotone
    operator \(\varphi\) on the power set of all finite sequences of typed temporal graphs. Level
    \(\E_0\) contains only direct event emissions, while \(\E_{d+1}\) is constructed from level
    \(\E_d\) by additionally allowing recursive \(\E_d\)-subagent invocations. The causal links are
    built and maintained at this local level through child node pointers to either the newly
    emitted events or the causal root of the recursive \(\E_d\)-subagent invocation. Even though
    the introduction of the full \(\E_d\)-hierarchy is natural, stabilisation occurs already at
    \(\E_1\). This is a consequence of the opacity of the subagent invocation interface, wherein
    parent agents receive only the completed subagent execution records and not their construction
    histories. Any CTEG admits an \(\E_0\)-construction by topological sorting, hence any
    \(\E_d\)-invocation with \(d \ge 1\) might as well have been an \(\E_0\)-invocation from the
    parent's point of view.

    The single-parenthood semantics is partly motivated by the following. Suppose a parent agent
    invokes a subagent, after whose completion the parent decides to emit a continuation event.
    Although events internal to the subagent execution may have influenced the parent's decision to
    produce the continuation, or even informed the contents of the continuation event itself, the
    parent agent has no principled basis (short of inspecting the subagent's internal execution)
    for determining which internal subagent events should be causally linked to the continuation.
    To remedy this tension one could either relax the insistence that global execution traces be
    fully resolved in place (allowing subagent invocations as atomic execution graph nodes), or
    parent agents could break epistemic locality and inspect subagents' internal affairs. We
    discuss these limitations and alternatives further in \cref{Section: Conclusion}.

    \begin{wrapfigure}{r}{0.5\textwidth}
        \centering
        \includegraphics[width=0.48\textwidth]{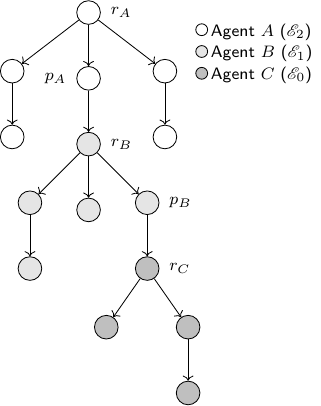}
        \caption{Global \(\E_2\)-execution trace for the agentic computation outlined in
        \cref{Fig: Recursive E2-execution trace}. Timestamps and types are omitted for clarity.}
        \label{Fig: Global E2-execution trace}
    \end{wrapfigure}
    
    Our theory thus addresses the additional challenge of building and maintaining
    \emph{well-formed} global causal structure in recursive agentic systems and shows that it
    suffices to solve this locally: Each agent manages its own (small) execution trace internally,
    subagent traces are grafted in at the parent agent's execution root upon completion (or
    failure), and the resulting global trace is guaranteed to be a well-formed CTEG. The question
    of what constitutes \emph{correct} (as opposed to well-formed) causal structure must be
    determined against choices of local parenthood semantics, which are orthogonal to the CTEG
    formalism in its current form. Instead of asserting which local semantics are most appropriate,
    CTEGs encode those local assignments once made and are sufficiently general to accommodate
    different semantic choices. Discussing which local semantic choices \emph{should} be made is
    therefore beyond the scope of the current paper.

    \subsection{Outline}
        The paper is organised as follows. In \cref{Section: CTEGs} we introduce causal-temporal
        event graphs, the criterion under which two CTEGs may be composed to preserve the CTEG
        structure, along with their relational encoding and cryptographic commitments via Merkle
        hashes. In \cref{Section: Agent Execution Hierarchy} we formalise the recursive
        \(\E_d\)-hierarchy model of cascading agent executions based on direct event emissions and
        subagent invocations, characterise their union \(\E_\infty\) both as the least fixed point
        of a monotone operator and the recursive closure of the maximal dynamics, and prove that
        all agent execution sequences arising in this way are finite sequences of CTEGs. We
        conclude in \cref{Section: Conclusion} with a brief discussion of limitations and natural
        extensions of the current work.

    \subsection{Notation}
        We write `iff' in place of `if and only if', \(0 \in \N\) is a natural number,
        \(\nodes(G)\) denotes the set \(N\) of nodes of a directed graph \(G = (N, E)\), and
        \(\edges(G) \subseteq N \times N\) its set \(E\) of edges. If \(f \colon A \to C\) and
        \(g \colon B \to C\) are maps with disjoint domains, then \(f \sqcup g\) is the unique map
        \(A \cup B \to C\) which restricts to \(f, g\) on \(A, B\), respectively. \(\pwset(X)\)
        denotes the power set of \(X\).

    \section{Causal-Temporal Event Graphs}
    \label{Section: CTEGs}
    In this section we present the static view of causal-temporal event graphs (CTEGs), a formal
    model for encoding causal and temporal relationships in recursive agent execution traces along
    with their event types under single-parenthood causal semantics. We state a fundamental lemma
    on when two CTEGs may be composed to preserve the CTEG structure, and show that causal
    structure may not be faithfully recovered from temporal structure alone. We observe that CTEGs
    admit natural relational database encodings and support tamper-evident cryptographic
    commitments via Merkle hashes. This lays the foundation for the formal treatment of the agent
    execution graph dynamics presented in \cref{Section: Agent Execution Hierarchy}.

    \subsection{Causal, Temporal, and Type Structures}
        By a \emph{causal graph} we mean a finite, non-empty directed acyclic graph \(G = (N, E)\)
        (where \(\nodes(G) \coloneqq N\) is the node set and
        \(\edges(G) \coloneqq E \subseteq N \times N\) the edge set), with a distinguished root
        node \(r \in \nodes(G)\) (called the \emph{causal root} and henceforth denoted
        \(r_G \coloneqq r\)) such that \(G\) is an arborescence with root \(r\):
        \begin{enumerate}
            \item
                All nodes \(n \in \nodes(G) \setminus \set{ r }\) are reachable from \(r\).

            \item
                Every \(n \in \nodes(G) \setminus \set{ r }\) has exactly one incoming edge.

            \item
                There are no edges into \(r\).
        \end{enumerate}

        \begin{remark}
            In a causal graph \(G\), the causal path from the root \(r\) to any
            \(n \in \nodes(G) \setminus \set{ r }\) is in fact unique. If not, choose \(n \ne r\)
            admitting two distinct directed walks \(w, w'\) from \(r\) minimising
            \(\l(w) + \l(w')\) (sum of lengths) by well-ordering. Since \(n\) has a unique incoming
            edge, both walks must pass through the same immediate predecessor, yielding a smaller
            counterexample. Contradiction.
        \end{remark}

        By a \emph{temporal structure} on a causal graph \(G\) we mean an assignment
        \(t \colon \nodes(G) \to \R\) of timestamps to the nodes. It is \emph{causally compatible}
        with \(G\) iff it is strictly increasing along causal paths: If \(m, n \in \nodes(G)\) and
        \(n\) is reachable from \(m\), then \(t(m) < t(n)\). It is enough to verify this criterion
        along directed \emph{edges} \((m, n) \in \edges(G)\) by transitivity.

        \begin{remark}[Cause-effect]
            Hence for a causally compatible temporal structure, a cause must strictly precede its
            effects, but sibling nodes sharing the same causal parent are allowed to be
            simultaneous.
        \end{remark}

        \begin{remark}[Tie-breaking]
            Our temporal structures assume real-valued timestamps with strict ordering along causal
            paths. In practice, finite clock resolution may require tie-breaking mechanisms when
            two causally related events receive identical timestamps from the system clock.
        \end{remark}

        Let \(T\) be a finite, non-empty set of \emph{types}. By a \emph{type map} on a causal
        graph \(G\) we mean an assignment \(\tau \colon \nodes(G) \to T\) of types to its nodes.
        Combining all of the above, we obtain the central structure of this paper.

        \begin{definition}[Causal-temporal event graphs]
            \label{Def: CTEG}
            A \emph{causal-temporal event graph} (CTEG) over a type set \(T\) is a causal graph
            \(G\) equipped with a causally compatible temporal structure
            \(t \colon \nodes(G) \to \R\) and a type map \(\tau \colon \nodes(G) \to T\). The full
            CTEG is denoted \((G, r, t, \tau)\), where \(r\) is the unique causal root.
        \end{definition}
        \begin{remark}[Flat type structures]
            The type set \(T\) is flat by design: A CTEG records event types but imposes no
            relations between them. Richer type structures, such as partially ordered type
            hierarchies, are treated as application-layer concerns rather than part of the
            causal-temporal formalism.
        \end{remark}
        \begin{remark}[Cryptographic session commitments]
            The arborescent structure of CTEG execution graphs admits tamper-evident cryptographic
            commitments of agent session histories, partial or complete, via Merkle tree
            computations. The Merkle hash \cite{Mer90} of a CTEG may thus serve as a
            cryptographically verifiable receipt of an agentic computation.
        \end{remark}
        \begin{remark}[Relational database encoding]
            CTEG session traces admit natural relational representations in standard database
            formats. This is robust also in the face of agent failure, as even partial execution
            traces of agentic computations following the assumptions of this paper are guaranteed
            to be CTEGs.

            To represent a CTEG \((G, r, t, \tau)\) in relational database format, first register
            the session to receive a globally unique session identifier. Each node of \(G\) is then
            written into an append-only node table, receiving a unique node identifier, a timestamp
            derived from the temporal structure \(t\), a reference to the session identifier, the
            node type derived from the type map \(\tau\), and an opaque payload column for the node
            data. Each non-root node additionally stores a reference to its unique causal parent
            node, which exists and is unique by the arborescence property. The root node has no
            parent reference.

            With agentic CTEG session persistence as outlined above, full causal and temporal
            session reconstruction reduces to a single query fetching all nodes with a given
            session identifier followed by pointer resolution.
        \end{remark}

    \subsection{Grafting}
        Given two causal graphs \((G_1, r_1)\) and \((G_2, r_2)\) with disjoint node sets, there is
        a natural \emph{grafting} operation attaching the root of \(G_2\) to a select node
        \(p \in \nodes(G_1)\) (see \cref{Fig: Grafting operation}). It is denoted
        \(G_1 \oplus_p (G_2, r_2)\) and defined as follows.
        \begin{figure}[ht]
            \includegraphics[scale=1]{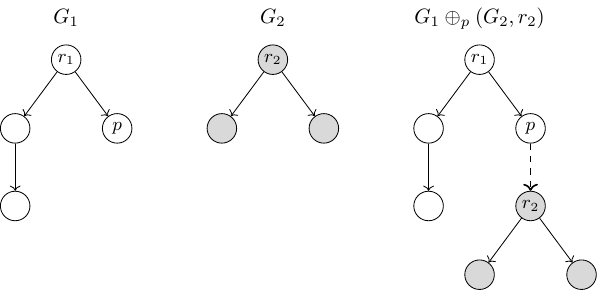}
            \caption{Illustrating the graft of a causal graph \(G_2\) into \(G_1\).}
            \label{Fig: Grafting operation}
        \end{figure}

        \begin{definition}[Graft]
            \label{Def: Grafting operation}
            The \emph{graft} \(G_1 \oplus_p (G_2, r_2)\) of two causal graphs \(G_1\) and \(G_2\)
            with disjoint node sets at \(p \in \nodes(G_1)\) is the directed graph \(G\) defined by
            \begin{equation}
                \begin{aligned}
                    \nodes(G) &\coloneqq \nodes(G_1) \cup \nodes(G_2), \\
                    \edges(G) &\coloneqq \edges(G_1) \cup \edges(G_2) \cup \set[\big]{ (p, r_2) }.
                \end{aligned}
            \end{equation}
        \end{definition}
        \begin{remark}
            The graft \(G\) defined in \cref{Def: Grafting operation} is thus a directed graph
            consisting of the same nodes and edges as its constituent subgraphs, but with a new
            edge joining \(p \in \nodes(G_1)\) to \(r_2 \in \nodes(G_2)\) adjoined. It is again a
            causal graph with causal root \(r_1\). This follows since the grafting operation only
            realises the following structural changes:
            \begin{itemize}
                \item
                    The in-degree changes from zero to one at the causal root \(r_2\) of \(G_2\).

                \item
                    All other nodes retain their in-degrees.

                \item
                    Reachability of all nodes from \(r_1\) holds by transitivity through \(p\).

                \item
                    No cycles are created, since any cycle would be pushed into one of the acyclic
                    subgraphs \(G_1\) or \(G_2\) since it could not have traversed the new edge
                    \((p, r_2)\) whilst starting and ending at the same node.
            \end{itemize}
        \end{remark}

        The following lemma answers the question of when the graft of two CTEGs preserves the CTEG
        structure. This is the case iff the adjoined grafting edge respects the causal-temporal
        compatibility criterion.

        \begin{lemma}[CTEG composition lemma]
            \label{Lemma: CTEG composition}
            Let \((G_i, r_i, t_i, \tau_i)\) for \(i = 1, 2\) be two CTEGs with type sets \(T_i\)
            and disjoint node sets. Let
            \begin{equation}
                G \coloneqq G_1 \oplus_p (G_2, r_2)
            \end{equation}
            be their graft at \(p \in \nodes(G_1)\). Then \(G\) equipped with the temporal
            structure \(t_1 \sqcup t_2\), type map \(\tau_1 \sqcup \tau_2\), and causal root
            \(r_1\) is a CTEG over \(T_1 \cup T_2\) iff
            \begin{equation}
                t_1(p) < t_2(r_2).
            \end{equation}
            We call this the \emph{causal-temporal compatibility criterion} for CTEG grafting.
        \end{lemma}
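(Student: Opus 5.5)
We first dispose of the parts of the CTEG structure that do not involve time. By the remark following \cref{Def: Grafting operation}, the graft \(G = G_1 \oplus_p (G_2, r_2)\) is again a causal graph with causal root \(r_1\). Because \(\nodes(G_1)\) and \(\nodes(G_2)\) are disjoint, \(t_1 \sqcup t_2\) is a well-defined map \(\nodes(G) \to \R\). Likewise \(\tau_1 \sqcup \tau_2\) is a well-defined map \(\nodes(G) \to T_1 \cup T_2\), and \(T_1 \cup T_2\) is finite and non-empty. So \((G, r_1, t_1 \sqcup t_2, \tau_1 \sqcup \tau_2)\) is a CTEG over \(T_1 \cup T_2\) iff \(t \coloneqq t_1 \sqcup t_2\) is causally compatible with \(G\). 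The lemma therefore reduces to showing that this compatibility holds iff \(t_1(p) < t_2(r_2)\).

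We use the edgewise criterion stated just after the definition of causal compatibility: compatibility along all edges of \(G\) implies compatibility along all causal paths, by transitivity. The converse is immediate, since every edge is a causal path of length one. By \cref{Def: Grafting operation}, \(\edges(G)\) is the union of three parts: \(\edges(G_1)\), \(\edges(G_2)\) and the single edge \((p, r_2)\).

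\begin{itemize}
    \item For an edge \((m,n) \in \edges(G_1)\), both endpoints lie in \(\nodes(G_1)\). Hence \(t(m) = t_1(m) < t_1(n) = t(n)\), because \(t_1\) is causally compatible with \(G_1\).
    \item Edges of \(G_2\) are handled in the same way, using \(t_2\).
    \item The only remaining edge is \((p, r_2)\). Since \(t(p) = t_1(p)\) and \(t(r_2) = t_2(r_2)\), the edgewise condition on this edge is exactly \(t_1(p) < t_2(r_2)\).
\end{itemize}

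For the forward direction, suppose the graft is a CTEG. Then \(r_2\) is reachable from \(p\) via the edge \((p, r_2)\), so causal compatibility forces \(t_1(p) < t_2(r_2)\). For the reverse direction, suppose \(t_1(p) < t_2(r_2)\). Then all three classes of edges satisfy the strict inequality, and the edgewise criterion gives causal compatibility along all paths.

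The main obstacle is the transitivity reduction from paths to edges. We will justify it explicitly: if \(n\) is reachable from \(m\), pick a directed walk \(m = v_0 \to v_1 \to \cdots \to v_k = n\) with \(k \ge 1\) and chain the strict inequalities \(t(v_{i-1}) < t(v_i)\) along it. Every other step is routine bookkeeping of the disjoint unions \(t_1 \sqcup t_2\) and \(\tau_1 \sqcup \tau_2\).
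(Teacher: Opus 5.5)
Your proof is correct and is the verification the paper means when it calls the lemma immediate. The graft remark gives the causal-graph structure with root \(r_1\), the disjointness of the node sets makes \(t_1 \sqcup t_2\) and \(\tau_1 \sqcup \tau_2\) well defined, and the edgewise criterion reduces causal compatibility to the single new edge \((p, r_2)\), since edges inside \(G_1\) and \(G_2\) are already handled by \(t_1\) and \(t_2\).
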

        \begin{remark}
            Recall that \(t_1 \sqcup t_2 \colon \nodes(G) \to \R\) is the unique map which
            restricts to \(t_i\) on \(\nodes(G_i)\) for \(i = 1, 2\), and similarly for
            \(\tau_1 \sqcup \tau_2\).
        \end{remark}
        \begin{proof}
            Immediate.
        \end{proof}
        \begin{remark}[Type contracts]
            This lemma mirrors the setup in production systems wherein agent \(A_1\) first declares
            its type set \(T_1\) to communicate its possible event types to type checkers, and
            similarly agent \(A_2\) declares \(T_2\). The execution of \(A_1\) alone would produce
            a CTEG over \(T_1\), whereas the execution of \(A_1\) invoking \(A_2\) as a subagent
            would produce a CTEG over \(T_1 \cup T_2\). This avoids the need for a centralised type
            hierarchy and makes the streaming contract between agents and their invokers explicit.
        \end{remark}
        \begin{remark}
            \label{Remark: Impossibility of causal reconstruction}
            We conclude this section by observing that causal structure may not be faithfully
            reconstructed from temporal structure alone. A counterexample will suffice. By a
            \emph{temporal projection} of a causal graph \(G\) with respect to a causally
            compatible temporal structure \(t \colon \nodes(G) \to \R\), we mean a bijective
            enumeration \((n_i)_{i=0}^m\) of its node set \(\nodes(G)\) such that
            \begin{equation}
                t(n_0) \le \cdots \le t(n_m).
            \end{equation}
            It is now easy to see that \((r, a, b, c)\) is the temporal projection of both \(G\)
            and \(G'\) in \cref{Fig: Causal reconstruction counterexample}, even though these have
            different causal structures.
            \begin{figure}[ht]
                \includegraphics[scale=1]{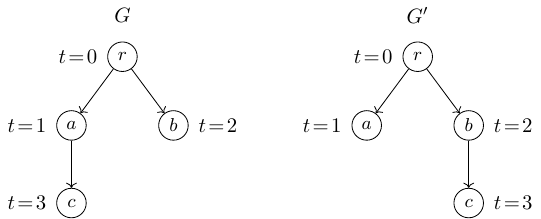}
                \caption{Two causal structures with the same temporal projection.}
                \label{Fig: Causal reconstruction counterexample}
            \end{figure}

            This precludes the hope of any strategy reconstructing a causal structure from the
            temporal structure alone. In particular, any implementation attempting to reconstruct a
            CTEG by observing the temporal sequence of events emitted by an agent (for instance, by
            externally decorating an asynchronous generator and listening to the temporal sequence
            of emitted events) will lose causal information unless it is also willing to inspect
            internal agent states (or those of its emitted events) in an attempt to infer the
            missing causal links. The causal structure is therefore best constructed and maintained
            internally by agents during execution, rather than inferred after the fact.
        \end{remark}

    \section{Recursive Execution and Stabilisation}
    \label{Section: Agent Execution Hierarchy}
    In this section we demonstrate how CTEGs arise naturally under the iterated local dynamics of
    direct event emissions and recursive subagent invocations starting from single causal roots.
    More precisely, we prove that the smallest set \(\E_\infty\) closed under these operations
    starting from single causal roots (its recursive closure) consists entirely of finite sequences
    of CTEGs.

    For set-theoretic reasons we fix an infinite set \(\A\) of \emph{actions} throughout this
    section to constrain the size of sets constructed in the following procedure. We also fix a
    finite type set \(T\). These will be left implicit in the notation.

    We start by introducing the ambient space \(\G\) of all finite sequences \((G_k)_{k=0}^n\) of
    \emph{typed temporal graphs}. These are directed graphs with types and timestamps, bearing no
    assumptions of being rooted or of compatibility between the causal and temporal structures. The
    idea is that a dynamical model of agent execution sequences recording types along with causal
    and temporal relationships must necessarily live in this space. We then single out
    \(\E_\infty \subset \G\) as the smallest set of such sequences
    \begin{equation}
        G_0 \gle \cdots \gle G_n
    \end{equation}
    starting from a single causal root \(G_0\) and evolving by either direct emissions or
    self-invocations at each step. \cref{Theorem: CTEG characterisation} proves that \(\E_\infty\)
    then consists entirely of finite sequences of CTEGs.

    We further introduce a recursive hierarchy \(\E_0 \subseteq \E_1 \subseteq \cdots\) of subagent
    execution levels parametrised by recursion depth and show that \(\E_\infty\) may be realised as
    the union \(\bigcup \E_d\) over all sublevels. In this connection we identify a monotone
    operator \(\varphi\) on the power set \(\pwset(\G)\) of \(\G\) exhibiting the \(\E_d\) as
    iterates \(\smash{\E_d = \varphi^{(d+1)}(\emptyset)}\) of the empty set and admitting
    \(\E_\infty\) as its smallest fixed point. Thus \(\varphi(\E_\infty) = \E_\infty\) and
    \(\E_\infty \subseteq \F\) whenever \(\F \subseteq \G\) and \(\varphi(\F) = \F\). This realises
    the \((\E_d)_{d \in \N}\) execution hierarchy as a simple recursion on \(\pwset(\G)\) under
    \(\varphi\) starting at \(\emptyset\) yielding \(\E_\infty\) as the canonical limit.

    The operator \(\varphi\) encodes the \emph{logical} progression between successive recursion
    levels: At each step, an agent either emits new events directly, or invokes a subagent whose
    internally completed execution sequence \((H_i)\) yields a final element \(H_m\) to be grafted
    in at the invocation root. The step-by-step temporal dynamics of the subagent's construction
    are \emph{not} reflected in the parent's execution sequence \((G_k)\): From the parent's
    perspective, the subagent's execution graph is received as a completed unit upon subagent
    termination and grafted atomically to enable separation of concerns.

    Even though the introduction of the full \(\E_d\)-hierarchy is natural, it turns out that
    stabilisation occurs already at level \(\E_1\), in the sense that \(\E_d = \E_1\) for all
    \(d \ge 1\) (but \(\E_0 \ne \E_1\)). The underlying reason is that the subagent invocation
    interface is opaque to construction history: It inspects only the final element \(H_m\) of a
    subagent trace sequence \((H_i)\). Any CTEG admits an \(\E_0\)-construction by topological
    sorting, hence every \(\E_d\)-invocation step can be replicated as an \(\E_0\)-invocation step
    by substituting an \(\E_0\)-sequence with the same final element.

    The construction of the \(\E_d\)-hierarchy by means of simple recursion of \(\varphi\) from the
    empty set can be seen in relation to Kleene's theorem in the complete lattice
    \(\smash{(\pwset(\G), \subseteq)}\). A compactness argument gives the continuity of
    \(\varphi\), and \((\E_d)_{ d \in \N }\) sits in the ascending Kleene chain
    \begin{equation}
        \emptyset
            \subseteq \varphi(\emptyset)
            \subseteq \varphi\paren[\big]{ \varphi(\emptyset) }
            \subseteq \cdots
    \end{equation}
    converging to the least fixed point \(\E_\infty = \bigcup \E_d\) of \(\varphi\) in the limit.

    \subsection{Local Dynamics}
        \label{Section: Local Dynamics}
        We write \(G \le G'\) when \(G, G'\) are directed graphs and \(G'\) extends \(G\) in the
        sense that \(\nodes(G) \subseteq \nodes(G')\) and \(\edges(G) \subseteq \edges(G')\). All
        nodes and edges of \(G\) are thus also in \(G'\) when \(G\) is a subgraph of \(G'\), but
        not necessarily the other way around. A node map \(f \colon \nodes(G') \to X\) on \(G'\)
        thus restricts to a node map \(\nodes(G) \to X\) on \(G\) when \(G \le G'\), an object we
        denote by \(f\restr_G\).

        By a \emph{typed temporal graph} (over \(T\), with nodes in \(\A\)) we mean a triple
        \((G, t, \tau)\), where \(G\) is a directed graph over \(\A\) with temporal structure
        \(t \colon \nodes(G) \to \R\) and type map \(\tau \colon \nodes(G) \to T\). We make no
        assumptions that these be rooted and will for notational simplicity write \(G\) in place of
        \((G, t, \tau)\) when \(t\) and \(\tau\) are clear from the context. The subgraph relation
        extends to typed temporal graphs to include compatibility conditions on the temporal and
        type structures under restrictions: We declare
        \begin{equation}
            (G, t, \tau) \gle (G', t', \tau')
        \end{equation}
        iff \(G \le G'\) and the compatibility conditions \(t'\restr_G = t\) and
        \(\tau'\restr_G = \tau\) hold.

        \begin{definition}
            We let \(\G\) denote the set of all finite sequences \((G_k)_{k=0}^n\) of typed
            temporal graphs \((G_k, t_k, \tau_k)\) over \(\A\) with type set \(T\).
        \end{definition}

        The next definition models the direct emission of new nodes by an agent in its current
        execution state (see \cref{Fig: Direct emissions}). This defines the first of the two local
        dynamical assumptions of our paper. We deliberately choose not to model simultaneous
        emissions from distinct roots, since this would complicate the arguments that follow
        without enlarging the class of attainable execution states.

        \begin{definition}[Direct emissions]
            \label{Def: Emission semantics}
            We say \(G \gle G'\) \emph{by direct emissions} for typed temporal graphs \(G, G'\) iff
            there exists a finite, non-empty set \(A \subset \A\) of new actions disjoint from
            \(\nodes(G)\) and an \emph{emission root} \(p \in \nodes(G)\) such that
            \begin{equation}
                G' = G \oplus_p A
            \end{equation}
            with \(t(p) < t'(a)\) for all \(a \in A\).
            \begin{figure}
                \includegraphics[scale=1]{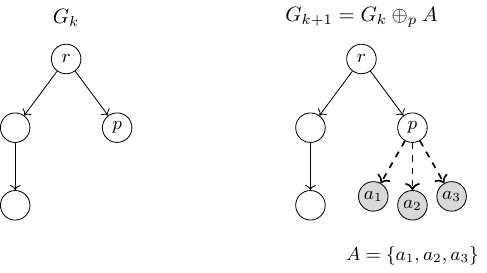}
                \caption{Evolution from state \(G_k\) to \(G_{k+1}\) by direct emissions.}
                \label{Fig: Direct emissions}
            \end{figure}
        \end{definition}
        \begin{remark}
            Of course, \(G \oplus_p A\) (where \(\nodes(G) \cap A = \emptyset\)) denotes the
            directed graph with node set \(\nodes(G) \cup A\) and edge set
            \(\edges(G) \cup \set{ (p,a) : a \in A }\).
        \end{remark}

        The second dynamical mode of our paper is the recursive invocation of a subagent, awaiting
        its execution graph, and grafting the resulting structure in at the parent agent's
        invocation root (see \cref{Fig: Subagent invocation}). So as to not make any a priori
        assumptions on the structure of these recursive invocations, we state our definition with
        respect to general finite sequences \((G_k)_{k=0}^n\) of typed temporal graphs
        \((G_k, t_k, \tau_k)\) and recall that \(\G\) denotes the set of all such.

        \begin{figure}
            \includegraphics[scale=1]{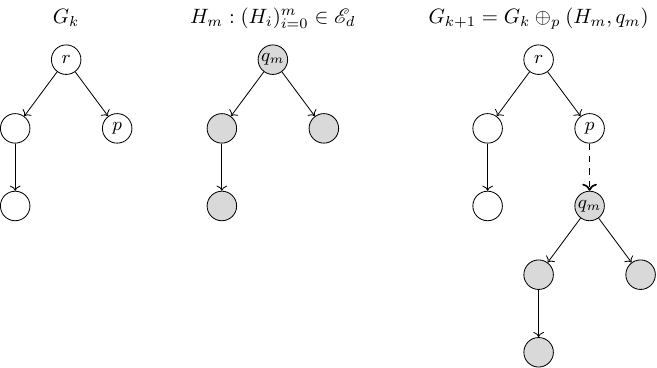}
            \caption{Evolution from state \(G_k\) to \(G_{k+1}\) by \(\E_d\)-subagent invocation.}
            \label{Fig: Subagent invocation}
        \end{figure}

        \begin{definition}[Subagent invocations]
            \label{Def: Subagent invocation semantics}
            Given a set of sequences \(\E \subseteq \G\) and typed temporal graphs \(G, G'\), we
            say \(G \gle G'\) \emph{by \(\E\)-invocation} iff there exists an \emph{invocation
            root} \(p \in \nodes(G)\) along with some \((H_i)_{i=0}^m \in \E\) with node sets
            disjoint from \(G\), and some \(q_m \in \nodes(H_m)\) of in-degree zero, such that
            \begin{equation}
                G' = G \oplus_{p} (H_m, q_m),
            \end{equation}
            where \(t' = t \sqcup t_m\), \(\tau' = \tau \sqcup \tau_m\), and \(t(p) < t_m(q_m)\)
            for compatibility.
        \end{definition}
        \begin{remark}
            Here \(t_m \colon \nodes(H_m) \to \R\) and \(\tau_m \colon \nodes(H_m) \to T\) denote
            the temporal and type structures of \(H_m\), respectively. Note also that when \(H_m\)
            is a CTEG, there is exactly one choice of \(q_m\), namely its causal root.
        \end{remark}

    \subsection{The Recursive Execution Hierarchy}
        We now introduce the \emph{agent execution hierarchy} \((\E_d)_{ d \in \N }\). A directed
        graph \(G\) will be said to be \emph{\(\A\)-trivial} iff there exists \(r \in \A\) such
        that \(\nodes(G) = \set{ r }\) and \(\edges(G) = \emptyset\). In other words, \(G\) is
        \(\A\)-trivial iff it has a single causal root in \(\A\) and no edges.
        \begin{enumerate}
            \item
                Let \(\E_0 \subset \G\) be the set of all finite sequences \((G_k)_{k=0}^n \in \G\)
                of typed temporal graphs such that \(G_0\) is \(\A\)-trivial and
                \begin{equation}
                    G_0 \gle \cdots \gle G_n
                \end{equation}
                by direct emissions at each step.

            \item
                Supposing recursively that \(\E_d \subset \G\) has been defined for some
                \(d \in \N\), define \(\E_{d+1}\) to be the set of all finite sequences
                \((G_k)_{k=0}^n \in \G\) such that \(G_0\) is \(\A\)-trivial and
                \begin{equation}
                    G_0 \gle \cdots \gle G_n
                \end{equation}
                by either direct emissions or \(\E_d\)-invocations at each step.
        \end{enumerate}
        
        \begin{remark}
            The recursion depth indexes an abstraction boundary: An \(\E_d\)-computation is a
            self-contained unit of causal structure, composable into any \(\E_{d+1}\)-computation
            meeting the causal-temporal compatibility criterion without the parent inspecting the
            subagent's internal execution history.
        \end{remark}

        This may be viewed as a simple recursion on the power set \(\pwset(\G)\) of \(\G\). Let
        \(\varphi \colon \pwset(\G) \to \pwset(\G)\) take a set \(\E \subseteq \G\) of (candidate)
        execution sequences to the set \(\varphi(\E) \subseteq \G\) of all \((G_k) \in \G\) with
        \(G_0\) \(\A\)-trivial and \(G_0 \gle \cdots \gle G_n\) by either direct emissions or
        \(\E\)-invocations at each step. Then
        \begin{equation}
            \label{Eq: Ed as simple recursion}
            \E_0 = \varphi(\emptyset) \qand \E_{d+1} = \varphi(\E_d) \quad (d \in \N).
        \end{equation}
        Hence \(\E_d = \varphi^{(d+1)}(\emptyset)\) (compositions) for all \(d \in \N\).

        \begin{lemma}
            \(\varphi\) is monotone: If \(\E \subseteq \E'\) in \(\pwset(\G)\), then
            \(\varphi(\E) \subseteq \varphi(\E')\).
        \end{lemma}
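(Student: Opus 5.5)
The plan is a direct unfolding of the definition of \(\varphi\), using the fact that \(\E\) enters that definition only through the existential clause in \cref{Def: Subagent invocation semantics}. Fix \(\E \subseteq \E'\) in \(\pwset(\G)\) and take an arbitrary \((G_k)_{k=0}^n \in \varphi(\E)\). We must show \((G_k) \in \varphi(\E')\). The requirement that \(G_0\) be \(\A\)-trivial does not depend on \(\E\), so it transfers verbatim, and it remains to check that each step \(G_k \gle G_{k+1}\) is a direct emission or an \(\E'\)-invocation.

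We argue stepwise on \(k < n\). If \(G_k \gle G_{k+1}\) by direct emissions, then \cref{Def: Emission semantics} makes no reference to \(\E\), so this justification is unchanged. Otherwise \(G_k \gle G_{k+1}\) by \(\E\)-invocation. We then take the witnessing data from \cref{Def: Subagent invocation semantics}: an invocation root \(p \in \nodes(G_k)\), a sequence \((H_i)_{i=0}^m \in \E\) with node sets disjoint from \(G_k\), and a node \(q_m \in \nodes(H_m)\) of in-degree zero. These satisfy \(G_{k+1} = G_k \oplus_p (H_m, q_m)\), the temporal and type structures are \(t_k \sqcup t_m\) and \(\tau_k \sqcup \tau_m\), and \(t_k(p) < t_m(q_m)\). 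Since \(\E \subseteq \E'\), the same sequence \((H_i)\) lies in \(\E'\). Every other condition involves only \(G_k\), \(G_{k+1}\), \(p\), \(H_m\), and \(q_m\), so the same witnesses exhibit \(G_k \gle G_{k+1}\) as an \(\E'\)-invocation.

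Hence every step of \((G_k)\) is admissible for \(\varphi(\E')\), and so \((G_k) \in \varphi(\E')\). There is no real obstacle here. The only point needing care is to confirm that membership in \(\E\) occurs \emph{positively} in the definition of \(\varphi\), i.e.\ only inside an existential quantifier and never negated, and that no condition, such as disjointness or the in-degree of \(q_m\), depends on \(\E\) beyond that membership.
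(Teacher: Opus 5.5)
Your proof is correct and is the same argument the paper has in mind when it calls the lemma ``Immediate''. The set \(\E\) enters the definition of \(\varphi\) only positively, through the witnessing sequence \((H_i) \in \E\) of an invocation step, and that sequence then also lies in \(\E'\), while the \(\A\)-triviality of \(G_0\) and the emission steps do not depend on \(\E\) at all.
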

        \begin{proof}
            Immediate.
        \end{proof}

        Clearly \(\E_0 \subseteq \E_1 \subseteq \cdots\) follows, and we put
        \begin{equation}
            \E_\infty \coloneqq \bigcup_{ d \in \N } \E_d.
        \end{equation}

        \begin{definition}
            \(\E_\infty\) is the set of \emph{agent execution sequences}.
        \end{definition}
        \begin{remark}
            See \cref{Fig: Recursive E2-execution trace,Fig: Global E2-execution trace} for a
            concrete example of an agent-subagent execution sequence to help unpack the formalism.
        \end{remark}

        We now give an intrinsic characterisation of \(\E_\infty\) without reference to its
        construction through the \(\E_d\)-hierarchy.

        \begin{lemma}[Recursive closure]
            \label{Lemma: Fixed-point theorem}
            \(\E_\infty\) is the smallest fixed point of \(\varphi\):
            \begin{equation}
                \varphi(\E_\infty) = \E_\infty,
            \end{equation}
            and if \(\F \subseteq \G\) is such that \(\varphi(\F) = \F\), then
            \(\E_\infty \subseteq \F\).
        \end{lemma}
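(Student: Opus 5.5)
The plan is to prove the two halves separately, in both cases arguing directly from the definition of \(\varphi\) and using the monotonicity lemma together with the chain \(\E_0 \subseteq \E_1 \subseteq \cdots\).

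\emph{Least fixed point.} I would do this first, since it is routine. Let \(\F \subseteq \G\) with \(\varphi(\F) = \F\). I show \(\E_d \subseteq \F\) for all \(d\) by induction. For the base case, \(\emptyset \subseteq \F\) and monotonicity give \(\E_0 = \varphi(\emptyset) \subseteq \varphi(\F) = \F\). For the inductive step, if \(\E_d \subseteq \F\) then \(\E_{d+1} = \varphi(\E_d) \subseteq \varphi(\F) = \F\). Taking the union gives \(\E_\infty \subseteq \F\). The argument only uses \(\varphi(\F) \subseteq \F\), so it shows \(\E_\infty\) is contained in every pre-fixed point.

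\emph{Fixed point equation.} The inclusion \(\E_\infty \subseteq \varphi(\E_\infty)\) is immediate. Each \(\E_{d+1} = \varphi(\E_d) \subseteq \varphi(\E_\infty)\) by monotonicity, and \(\E_0 \subseteq \E_1\), so the whole union lies in \(\varphi(\E_\infty)\).

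The reverse inclusion \(\varphi(\E_\infty) \subseteq \E_\infty\) is a compactness (finitary) argument, and I expect it to be the only real point. Let \((G_k)_{k=0}^n \in \varphi(\E_\infty)\).
\begin{enumerate}
    \item Each step \(G_k \gle G_{k+1}\) is either a direct emission or an \(\E_\infty\)-invocation.
    \item An \(\E_\infty\)-invocation step is witnessed by a single sequence \((H_i) \in \E_\infty\), which lies in some \(\E_{d_k}\).
    \item Since there are only finitely many steps, set \(D = \max_k d_k\), with \(D = 0\) if there are no invocation steps.
    \item Because the hierarchy is increasing, every witness lies in \(\E_D\). So every invocation step is also an \(\E_D\)-invocation, and direct emission steps are unaffected.
    \item The initial graph \(G_0\) is still \(\A\)-trivial, so \((G_k) \in \varphi(\E_D) = \E_{D+1} \subseteq \E_\infty\).
\end{enumerate}

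The key point is that the definition of \(\E\)-invocation quantifies over a single witness sequence per step, and sequences are finite. This is exactly the continuity of \(\varphi\) on directed unions: \(\varphi\bigl(\bigcup_d \E_d\bigr) = \bigcup_d \varphi(\E_d)\). With both inclusions, \(\varphi(\E_\infty) = \E_\infty\).
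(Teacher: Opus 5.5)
Your proposal is correct and follows essentially the same route as the paper: minimality by induction \(\E_d \subseteq \F\) via monotonicity, \(\E_\infty \subseteq \varphi(\E_\infty)\) from \(\E_{d+1} = \varphi(\E_d) \subseteq \varphi(\E_\infty)\), and a finiteness argument putting all invocation witnesses in a single level for the reverse inclusion. You make the choice \(D = \max_k d_k\) explicit where the paper only asserts that a suitable \(d\) exists. Your observation that only \(\varphi(\F) \subseteq \F\) is needed matches the paper's accompanying remark.
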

        \begin{remark}
            The same holds if merely \(\varphi(\F) \subseteq \F\).
        \end{remark}
        \begin{proof}
            For each \(d \in \N\) it holds that
            \(\E_d \subseteq \E_{d+1} = \varphi(\E_d) \subseteq \varphi(\E_\infty)\), hence
            \(\E_\infty = \bigcup \E_d \subseteq \varphi(\E_\infty)\), and this takes care of the
            first inclusion. To prove \(\varphi(\E_\infty) \subseteq \E_\infty\), note that any
            \((G_k) \in \varphi(\E_\infty)\) evolves from \(G_0\) \(\A\)-trivial by either direct
            emissions or \(\E_\infty\)-invocations at each step. Since there are finitely many
            steps, there exists \(d \in \N\) (compactness) such that
            \((G_k) \in \varphi(\E_d) = \E_{d+1} \subseteq \E_\infty\). This proves that
            \(\E_\infty\) is a fixed point of \(\varphi\).

            To prove it is the smallest, assume \(\F \subseteq \G\) satisfies \(\varphi(\F) = \F\),
            and observe first that \(\E_0 = \varphi(\emptyset) \subseteq \varphi(\F) = \F\).
            Furthermore, if \(\E_d \subseteq \F\) for some \(d \in \N\), then also
            \(\E_{d+1} = \varphi(\E_d) \subseteq \varphi(\F) = \F\), hence
            \(\E_\infty = \bigcup \E_d \subseteq \F\) by induction.
        \end{proof}
        \begin{remark}
            \label{Remark: Fixed point literature}
            Similar fixed point arguments are standard in domain theory. The reader can find in
            Abramsky \& Jung \cite[Theorem~2.1.19]{AJ94} the more general statement that every
            Scott-continuous (monotone) operator \(f\) on a pointed DCPO (partially ordered set in
            which every directed subset has a supremum) \((P, \bot)\) admits a least fixed point
            realised as the supremum of the ascending \(f\)-chain \((f^{(k)}(\bot))_{ k \in \N }\)
            starting at the bottom element \(\bot \in P\). Tarski \cite{Tar55} gives fixed points
            for monotone operators on complete lattices without continuity assumptions, at the
            expense of less constructive realisations, and contains interesting pointers to similar
            results in the literature. See also Kleene \cite{Kle52}.
        \end{remark}

        The compactness argument in the proof of \cref{Lemma: Fixed-point theorem} in fact
        generalises to establish continuity, recovering the connections outlined in
        \cref{Remark: Fixed point literature}.

        \begin{corollary}
            \(\varphi\) is Scott-continuous on the power set lattice \(\pwset(\G)\).
        \end{corollary}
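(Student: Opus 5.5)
To prove Scott-continuity we must show that \(\varphi\) preserves suprema of directed families in \((\pwset(\G), \subseteq)\). Concretely: for every directed family \(\mathcal{D} \subseteq \pwset(\G)\) (non-empty, and any two members contained in a third),
\begin{equation}
    \varphi\paren[\Big]{ \bigcup_{\E \in \mathcal{D}} \E } = \bigcup_{\E \in \mathcal{D}} \varphi(\E).
\end{equation}
The proof splits into two inclusions, mirroring the proof of \cref{Lemma: Fixed-point theorem}.

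The inclusion \(\supseteq\) follows from monotonicity of \(\varphi\). Each \(\E \in \mathcal{D}\) satisfies \(\E \subseteq \bigcup \mathcal{D}\), hence \(\varphi(\E) \subseteq \varphi(\bigcup \mathcal{D})\).

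For \(\subseteq\), take \((G_k)_{k=0}^n \in \varphi(\bigcup \mathcal{D})\). By definition, \(G_0\) is \(\A\)-trivial and each step \(G_{k} \gle G_{k+1}\) is either a direct emission or a \(\bigcup\mathcal{D}\)-invocation. Let \(I \subseteq \set{0, \dots, n-1}\) index the invocation steps. For each \(k \in I\) there is a witnessing sequence \((H^{(k)}_i) \in \bigcup \mathcal{D}\), so we may choose some \(\E_k \in \mathcal{D}\) containing it. This is the compactness step: only finitely many witnesses are used.

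\(I\) is finite, so directedness (applied inductively) gives a single \(\E^\ast \in \mathcal{D}\) with \(\E_k \subseteq \E^\ast\) for all \(k \in I\). If \(I = \emptyset\), take \(\E^\ast\) to be any element of \(\mathcal{D}\), which exists since directed sets are non-empty.

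Each invocation step of \((G_k)\) is then an \(\E^\ast\)-invocation with the same invocation root, the same witness, and the same root \(q_m\) of the final witness graph. The disjointness and compatibility conditions in \cref{Def: Subagent invocation semantics} refer only to these data, not to the ambient set. Direct emission steps do not depend on the set at all. Hence \((G_k) \in \varphi(\E^\ast) \subseteq \bigcup_{\E \in \mathcal{D}} \varphi(\E)\).

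The main obstacle is the bookkeeping in this last part. We must make sure that finitely many steps require only finitely many witnesses, and that membership of a witness in a set is the only set-dependent clause of the step relation, so that enlarging to \(\E^\ast\) loses nothing. The empty directed family must also be excluded, which the standard definition of a directed set does.

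Finally, applying the result to the chain \((\E_d)\) recovers \cref{Lemma: Fixed-point theorem} via Kleene's theorem (\cref{Remark: Fixed point literature}).
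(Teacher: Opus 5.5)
Your proof is correct and is essentially the paper's argument: monotonicity gives one inclusion, and for the other you note that only finitely many invocation witnesses occur and use directedness to place them all in a single member of the family. One detail the paper leaves implicit is your treatment of the case with no invocation steps, which relies on directed families being non-empty. That assumption is genuinely needed, since \(\varphi(\emptyset) = \E_0 \ne \emptyset\) means \(\varphi\) does not preserve the supremum of the empty family.
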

        \begin{proof}
            To see this, let
            \((\F_i)\) be a directed family in \(\pwset(\G)\), and we need to prove
            \(\smash{\varphi(\bigcup \F_i) = \bigcup \varphi(\F_i)}\). The inclusion
            \(\smash{\varphi(\F_j) \subseteq \varphi(\bigcup \F_i)}\) holds for all \(j\) since
            \(\varphi\) is monotone, and from this follows the first inclusion
            \(\smash{\bigcup \varphi(\F_j) \subseteq \varphi(\bigcup \F_i)}\). To prove the reverse
            inclusion, consider \((G_k) \in \varphi(\bigcup \F_i)\), so that \(G_0\) is
            \(\A\)-trivial and we pick for each \(k\) for which \(G_{k+1}\) is obtained from
            \(G_k\) by \(\smash{\bigcup \F_i}\)-invocation an index \(i_k\) such that the
            invocation sequence lies in \(\F_{i_k}\). Since \((\F_i)\) is directed and there are
            finitely many steps, there is \(l\) such that \(\F_{i_k} \subseteq \F_l\) for all
            \(k\), and from this follows
            \(\smash{(G_k) \in \varphi(\F_l) \subseteq \bigcup \varphi(\F_j)}\).
        \end{proof}

        The following theorem characterises the objects contained in \(\E_\infty\).

        \begin{theorem}
            \label{Theorem: CTEG characterisation}
            \(\E_\infty\) is the smallest set of execution sequences (subsets of \(\G\)) closed
            under direct emissions and self-invocations starting at \(\A\)-trivial causal roots.
            All its elements are finite sequences of CTEGs over \(T\) with nodes in \(\A\),
            evolving according to the emission and invocation semantics of
            \cref{Section: Local Dynamics}.
        \end{theorem}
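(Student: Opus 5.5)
The first assertion is a restatement of \cref{Lemma: Fixed-point theorem}. I will read ``closed under direct emissions and self-invocations starting at \(\A\)-trivial causal roots'' as the condition \(\varphi(\F) \subseteq \F\). Such an \(\F\) contains every sequence that starts at an \(\A\)-trivial graph and evolves by direct emissions or \(\F\)-invocations. By the remark following \cref{Lemma: Fixed-point theorem}, \(\E_\infty\) lies inside every such \(\F\). Since \(\varphi(\E_\infty) = \E_\infty\), it is itself one of them, so it is the smallest. This part is purely bookkeeping.

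The substantive claim is that each \((G_k)_{k=0}^n \in \E_\infty\) consists of CTEGs over \(T\) with nodes in \(\A\). Since \(\E_\infty = \bigcup_d \E_d\), I will prove the following by induction on \(d\).
\begin{quote}
\(P(d)\): every sequence in \(\E_d\) consists of CTEGs with a common causal root, namely the unique node of \(G_0\).
\end{quote}
For each fixed \(d\), I will run an inner induction on \(k\) along the sequence. The base case \(k = 0\) is immediate. An \(\A\)-trivial \(G_0 = (\set{r}, \emptyset)\) is a one-node arborescence, causal compatibility holds vacuously, and \(\tau_0\) takes values in \(T\).

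For the step \(G_k \gle G_{k+1}\) there are two cases.
\begin{itemize}
    \item \textbf{Direct emission at \(p\) with new action set \(A\).} Here \(G_{k+1} = G_k \oplus_p A\). I will view this as iterated grafts of one-node CTEGs \((\set{a}, a)\) with timestamp \(t_{k+1}(a)\), one for each \(a \in A\). Each graft satisfies \(t_k(p) < t_{k+1}(a)\), so \cref{Lemma: CTEG composition} applies each time. The nodes stay in \(\A\), the root stays \(r\), and \(t_{k+1}\) restricts correctly to \(G_k\) because \(G_k \gle G_{k+1}\).
    \item \textbf{\(\E_{d-1}\)-invocation with \((H_i)_{i=0}^m \in \E_{d-1}\) (for \(d \ge 1\)).} By the outer hypothesis \(P(d-1)\), \(H_m\) is a CTEG. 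So \(q_m\), which has in-degree zero, must be its causal root \(r_{H_m}\), as in the remark after \cref{Def: Subagent invocation semantics}. The compatibility condition \(t_k(p) < t_m(q_m)\) is exactly the causal-temporal compatibility criterion. Hence \cref{Lemma: CTEG composition} shows that \(G_k \oplus_p (H_m, r_{H_m})\), with \(t_k \sqcup t_m\) and \(\tau_k \sqcup \tau_m\), is a CTEG over \(T \cup T = T\) rooted at \(r\).
\end{itemize}

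The main point needing care is that the subagent trace is an arbitrary element of \(\E_{d-1}\), so its last graph is a CTEG only by the outer induction hypothesis. The induction must therefore be nested: outer on \(d\), inner on \(k\). It also has to carry the claim that the root is preserved, since otherwise the root choice \(q_m\) would not be determined. For the same reason I would not try to induct directly over membership in \(\E_\infty\) via the fixed point.

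A second minor check is that \cref{Lemma: CTEG composition} is stated for disjoint node sets. This disjointness is built into both \cref{Def: Emission semantics} and \cref{Def: Subagent invocation semantics}. Finally, I will note that the sequences evolve by emission and invocation steps by definition of \(\E_d\), which gives the last clause of the theorem.
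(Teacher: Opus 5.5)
Your proposal is correct and follows the paper's proof. Minimality comes from the argument of \cref{Lemma: Fixed-point theorem} applied to any \(\F\) with \(\varphi(\F) \subseteq \F\), and the CTEG claim comes from an outer induction on \(d\) and an inner induction on \(k\), with \cref{Lemma: CTEG composition} at invocation steps; your splitting of a direct emission into one-node grafts just spells out the paper's assertion that emissions preserve the CTEG structure.

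Carrying the common root in \(P(d)\) is harmless but unnecessary, because a CTEG has exactly one node of in-degree zero, so \(q_m\) is determined anyway. Your closing caution is also unneeded: taking \(\F\) to be the set of all finite sequences of CTEGs, checking \(\varphi(\F) \subseteq \F\) by induction on \(k\), and concluding \(\E_\infty \subseteq \F\) would work without nesting over \(d\).
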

        \begin{proof}
            If all initial states of elements of \(\F \subseteq \G\) are \(\A\)-trivial and
            \(\varphi(\F) \subseteq \F\), it is proved as in \cref{Lemma: Fixed-point theorem} that
            in fact \(\E_\infty \subseteq \F\). Hence \(\E_\infty\) is the smallest subset of
            \(\G\) closed under direct emissions and self-invocations starting from single causal
            roots.

            That \(\E_\infty\) consists entirely of finite sequences of CTEGs is an induction on
            recursion depth in the decomposition \(\smash{\E_\infty = \bigcup \E_d}\). One checks
            it holds for \(\E_0\) since any \((G_k) \in \E_0\) starts at \(G_0\) \(\A\)-trivial
            (and is therefore a CTEG) and proceeds only by direct emissions which preserve the CTEG
            structure at each step. If \(d \in \N\) is such that the claim holds for \(\E_d\), one
            argues as for \(\E_0\) that the same must be true for \(\E_{d+1}\) as well. Indeed, if
            \((G_k) \in \E_{d+1}\), the only change in the argument is when \(G_{k+1}\) is obtained
            by grafting in the final element \(H_m\) of some subagent trace sequence
            \((H_i) \in \E_d\). In this case, \(H_m\) is a CTEG by the induction hypothesis (with
            \(q_m\) the causal root), hence the same is true of \(G_{k+1}\) by the inner induction
            on \(k\) and the CTEG composition lemma (\cref{Lemma: CTEG composition}).
        \end{proof}

        \begin{remark}[Robustness to partial failure]
            Since well-formedness is preserved at each agent execution step, any prefix of an agent
            execution sequence produces a valid CTEG. In particular, if a session terminates
            prematurely due to failure, the partial trace retains full causal and temporal
            integrity up to that point and therefore remains directly serialisable.
        \end{remark}

        As promised, we prove stabilisation of the \(\E_d\)-hierarchy already at \(\E_1\).

        \begin{theorem}
            One has \(\E_0 \ne \E_1 = \E_\infty\), hence \(\E_d = \E_1\) for all \(d \ge 1\).
        \end{theorem}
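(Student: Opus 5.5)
The plan has three parts: separate \(\E_0\) from \(\E_1\), show \(\E_2 \subseteq \E_1\), and then conclude by induction.

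\emph{Step 1: \(\E_0 \ne \E_1\).} It suffices to exhibit a one-step sequence \((G_0, G_1) \in \E_1 \setminus \E_0\).

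\begin{itemize}
\item Take \(G_0\) \(\A\)-trivial with root \(r\) and \(t_0(r) = 0\).
\item Take a subagent sequence \((H_i)_{i=0}^{1} \in \E_0\), where \(H_0\) is the singleton \(\{q\}\) and \(H_1 = H_0 \oplus_q \{a\}\), with \(0 < t(q) < t(a)\).
\item Let \(G_1 = G_0 \oplus_r (H_1, q)\), so \(G_1\) is the path \(r \to q \to a\).
\end{itemize}

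Then \(G_0 \gle G_1\) by \(\E_0\)-invocation, so \((G_0, G_1) \in \E_1\). This sequence cannot lie in \(\E_0\). A single direct-emission step adds only children of one existing node \(p \in \nodes(G_0) = \{r\}\), so it can never produce a node at depth two. The essential point is that the distinction lives at the level of \emph{sequences}: \(G_1\) is reachable in \(\E_0\) in two steps, but not in one.

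\emph{Step 2: \(\E_2 \subseteq \E_1\).} The key lemma is that every CTEG \((H, q, t, \tau)\) with nodes in \(\A\) is the final element of some sequence in \(\E_0\).

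\begin{itemize}
\item Enumerate \(\nodes(H)\) as \(q = n_0, n_1, \dots, n_m\) in a topological order compatible with the arborescence, for instance breadth-first order or sorting by \(t\), with ties broken by depth.
\item Let \(H_j\) be the induced subgraph on \(\{n_0, \dots, n_j\}\), with \(t\) and \(\tau\) restricted.
\item Each parent precedes its child in the order, so \(H_j\) is an arborescence rooted at \(q\).
\item \(H_{j+1} = H_j \oplus_{p} \{n_{j+1}\}\), where \(p\) is the unique parent of \(n_{j+1}\), and \(t(p) < t(n_{j+1})\) by causal compatibility.
\item \(H_0\) is \(\A\)-trivial.
\end{itemize}

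Hence \((H_j)_{j=0}^m \in \E_0\). Note that the emission rule requires \(A\) to be non-empty and disjoint from \(\nodes(H_j)\), and that \(t'\) restricts correctly.

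Now let \((G_k) \in \E_2\) and consider a step \(G_k \gle G_{k+1}\) by \(\E_1\)-invocation with subagent sequence \((H_i)_{i=0}^{m} \in \E_1\) and in-degree-zero node \(q_m\).

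\begin{itemize}
\item By the characterisation theorem (\cref{Theorem: CTEG characterisation}), \(H_m\) is a CTEG, so \(q_m\) is its causal root.
\item By the lemma, there is \((H'_j) \in \E_0\) with the same final element \(H_m\), the same node set, and the same \(t\) and \(\tau\).
\item So the same step is an \(\E_0\)-invocation: the graft condition and the inequality \(t(p) < t_m(q_m)\) are unchanged.
\end{itemize}

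Replacing every \(\E_1\)-invocation step in this way shows \((G_k) \in \varphi(\E_0) = \E_1\). Hence \(\E_2 \subseteq \E_1\), and since \(\E_1 \subseteq \E_2\) we get \(\E_2 = \E_1\).

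\emph{Step 3: conclusion.} By induction, \(\E_{d+1} = \varphi(\E_d) = \varphi(\E_1) = \E_2 = \E_1\) for all \(d \ge 1\), and therefore \(\E_\infty = \bigcup \E_d = \E_1\).

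The main obstacle is the topological-sorting lemma. One must check carefully that every intermediate \(H_j\) satisfies the \(\gle\) compatibility conditions and is a legitimate typed temporal graph over \(\A\). One must also confirm that the membership conditions of \(\E_0\) are met exactly, namely that \(H_0\) is \(\A\)-trivial and that each step emits a non-empty fresh set. A breadth-first enumeration makes parent-before-child immediate, so I would use that.
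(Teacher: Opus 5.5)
Your proof is correct and follows the paper's argument. The separation $\E_0 \ne \E_1$ rests on the same height jump in a single invocation step, which you make concrete with an explicit one-step witness. The equality $\E_1 = \E_\infty$ comes from the same topological-sort observation, which turns every $\E_1$-invocation into an $\E_0$-invocation and so gives $\varphi(\E_1) \subseteq \E_1$. The only difference is that you close with the direct induction $\E_{d+1} = \varphi(\E_d) = \varphi(\E_1) = \E_1$, where the paper instead cites the least-fixed-point lemma.
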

        \begin{proof}
            The inclusion \(\E_0 \subset \E_1\) is strict: An \(\E_0\)-sequence \((H_i)\) of two
            successive emission steps may produce a final CTEG \(H_m\) of height at least \(2\).
            Grafting \(H_m\) in a single \(\E_0\)-invocation step at an appropriate leaf node of
            the parent agent's execution graph may increase its height by more than one. Since this
            is not possible by direct emissions alone, it follows that \(\E_0 \ne \E_1\).

            To prove \(\E_1 = \E_\infty\), it suffices to show that \(\varphi(\E_1) = \E_1\) since
            \(\E_\infty\) is the smallest fixed point of \(\varphi\) and
            \(\E_1 \subseteq \E_\infty\). Let \((G_k) \in \varphi(\E_1)\). Each invocation step
            grafts the final element \(H_m\) of some \(\E_1\)-construction sequence \((H_i)\).
            Since \(H_m\) is a CTEG, hence a finite arborescence with causally compatible
            timestamps, a topological sort yields an \(\E_0\)-sequence with \(H_m\) as the final
            element. Hence every \(\E_1\)-invocation step is also an \(\E_0\)-invocation step, and
            so \((G_k) \in \varphi(\E_0) = \E_1\). This proves that
            \(\varphi(\E_1) \subseteq \E_1\), and the reverse inclusion follows from
            \eqref{Eq: Ed as simple recursion} onwards.
        \end{proof}

    \section{Conclusion}
    \label{Section: Conclusion}
    The present work has introduced causal-temporal event graphs (CTEGs) as a formal model for
    fully resolved recursive agent execution records under single-parenthood causal semantics. We
    formalised direct event emissions and recursive subagent invocations and proved that the
    recursive closure \(\E_\infty\) of these operations from single causal roots in the ambient
    space \(\G\) of finite sequences of typed temporal graphs consists entirely of finite sequences
    of CTEGs. We further observed that \(\E_\infty\) is the least fixed point of a monotone
    operator \(\varphi\) on \(\pwset(\G)\), the power set of \(\G\), obtained as the limit of the
    ascending \(\varphi\)-chain \(\emptyset \subseteq \E_0 \subseteq \E_1 \subseteq \cdots\).
    Although the full hierarchy arises naturally, stabilisation occurs already at \(\E_1\). This
    reflects the opacity of the subagent invocation interface, wherein only the final subagent
    execution structure \(H_m\) is exposed to the parent, while its construction history \((H_i)\)
    is discarded. Every CTEG admits an \(\E_0\)-construction by topological sorting, hence \(H_m\)
    might as well have come from an \(\E_0\)-invocation from the parent's point of view.

    The central practical consequence is the identification of a compositional mechanism by which
    well-formed global causal structure may be assembled from delegated local subagent trace
    constructions without centralised coordination. The execution model is append-only by design:
    No operation retracts or modifies existing nodes, preserving the recorded causal structure. The
    arborescent structure of CTEGs enables natural relational database encodings of agent session
    histories as well as tamper-evident cryptographic commitments via Merkle tree hashes.
    Interrupted sessions preserve these properties and therefore remain fully serialisable under
    the same encoding.

    Capturing recursion depth as a structurally observable global execution trace invariant is an
    interesting direction for future work and can likely be achieved by carefully preserving more
    of the subagent execution history at the invocation interfaces or by treating subagent
    invocations as atomic nodes in the parent agents' execution graphs. The latter would
    entail global execution records no longer be fully resolved in place, instead giving rise to
    trees of local views. We have also left open the investigation of direct agent-to-agent
    communications outside strict parent-child relationships, as well as the paradigm of more
    general multi-parenthood causal semantics.

    We have distinguished between well-formed and correct causal structure, arguing that the latter
    must be assessed relative to a choice of local parenthood semantics, the selection of which is
    orthogonal to the CTEG formalism as the encoding layer and outside the scope of the current
    paper. A natural extension of this work would be to formalise such a notion of local semantics,
    constraining the admissible parenthood attributions available at each step in accordance with
    well-founded causal principles. From this perspective, the theory developed herein describes
    the \emph{maximal} case, in which every generically available local parenthood attribution is
    allowed.

    \printbibliography
\end{document}